\newcommand{\C}{{\sf C\hspace*{-0.9ex}\rule{0.15ex}%
       {1.3ex}\hspace*{0.9ex}}}
\newcommand{\ket}[1]
           {\ensuremath{\vert #1 \rangle}}
\newcommand{\entangled}
           {\bf{E}}
\newcommand{\tensor}
           {\ensuremath{\otimes}}
\newcommand{\cnot}
           {\ensuremath{\mathfrak{Cnot}}}
\newcommand{\hadamard}
           {\ensuremath{\mathfrak{H}}}
\newcommand{\phase}
           {\ensuremath{\mathfrak{T}}}
\newcommand{\lambdaq}
           {\ensuremath{\lambda_{L}^{Q}}}
\newcommand{\abstr}[3]
           {\ensuremath{\lambda #1:#2.#3}}
\newcommand{\app}[2]
           {\ensuremath{(#1~~#2)}}
\newcommand{\true}
           {\mbox{{\bf 1}}}
\newcommand{\false}
           {\mbox{{\bf 0}}}
\newcommand{\couple}[1]
           {\ensuremath{\langle #1 \rangle}}
\newcommand{\proj}[2]
            {\ensuremath{\pi_{#1}{#2}}}
\newcommand{\letcouple}[3]
           {{\mbox {\sf let~}}\ensuremath{\couple{#1}=#2}
                   \mbox{{\sf ~in~}}\ensuremath{#3}}
\newcommand{\ifthls}[3]
           {\mbox{{\sf if~}}\ensuremath{#1}\mbox{{\sf ~then~}}
            \ensuremath{#2}\mbox{{\sf ~else~}}\ensuremath{#3}}
\newcommand{\meas}
           {\mbox{{\sf meas}}}
\newcommand{\newq}
           {\mbox{{\sf new}}}
\newcommand{\bits}
           {\ensuremath{{\bf B}}}
\newcommand{\qbits}
           {\ensuremath{{\bf B}^{\circ}}}
\newcommand{\inferencerule}[3]
           {\genfrac{}{}{}{0}{#1}{#2}#3}
\newcommand{\superpose}[2]
           {\stackrel{{\displaystyle #1}}{{\displaystyle #2}}}
\newcommand{\AXQ}
           {[VarQ]}
\newcommand{\AXC}
           {[AxQ]}
\newcommand{\AX}
           {[Var]}
\newcommand{\AXCT}
           {[AxT]}
\newcommand{\AXCF}
           {[AxF]}
\newcommand{\WEAK}
           {[Wkg]}
\newcommand{\ARI}
           {\ensuremath{[\to I]}}
\newcommand{\ARE}
           {\ensuremath{[\to E]}}
\newcommand{\ALI}
           {\ensuremath{[\to^{\circ} I]}}
\newcommand{\ALE}
           {\ensuremath{[\to^{\circ} E]}}
\newcommand{\TII}
           {\ensuremath{[\tensor I]}}
\newcommand{\IFI}
           {\ensuremath{[IF\ I]}}
\newcommand{\TIE}
           {\ensuremath{[\tensor E]}}
\newcommand{\BETA}
           {\ensuremath{[\beta]}}
\newcommand{\BETAV}
           {\ensuremath{[\beta V]}}
\newcommand{\APP}
           {\ensuremath{[App]}}
\newcommand{\APPC}
           {\ensuremath{[Apc]}}
\newcommand{\APPV}
           {\ensuremath{[Apv]}}
\newcommand{\IF}
           {\ensuremath{[IF]}}
\newcommand{\IFF}
           {\ensuremath{[IF/F]}}
\newcommand{\IFT}
           {\ensuremath{[IF/T]}}
\newcommand{\CPLL}
           {\ensuremath{[LFT]}}
\newcommand{\CPLR}
           {\ensuremath{[RGT]}}
\newcommand{\PHASE}
           {\ensuremath{[PHS]}}
\newcommand{\HADA}
           {\ensuremath{[HDR]}}
\newcommand{\MEST}
           {\ensuremath{[MET]}}
\newcommand{\MESF}
           {\ensuremath{[MEF]}}
\newcommand{\CNOT}
           {\ensuremath{[CNO]}}
\newcommand{\redprob}[1]
           {\ensuremath{\to_{#1}}}
\newcommand{\redprobabstr}
           {\ensuremath{\to_{{\cal A}}}}
\newcommand{\et}
           {~~~~}
\newcommand{\assertion}[3]
           {\ensuremath{\{#1\} #2 \{#3\}}}
\newcommand{\entangle}
           {\ensuremath{\leftrightarrow}}
\newcommand{\pure}
           {\ensuremath{\|}}
\newcommand{\evaluation}[3]
           {\ensuremath{#1 \bullet #2 = #3}}
\newcommand{\safe}[1]
           {}
\newcommand{\relentangle}
          {\ensuremath{{\mathcal R}}}
\newcommand{\purestate}
           {\ensuremath{{\mathcal P}}}
\newcommand{\PHASEABS}
           {\ensuremath{[PHS_{{\cal A}}]}}
\newcommand{\HADAABS}
           {\ensuremath{[HDR_{{\cal A}}]}}
\newcommand{\MESABS}
           {\ensuremath{[MET_{{\cal A}}]}}
\newcommand{\NEWABS}
           {\ensuremath{[NEW_{{\cal A}}]}}
\newcommand{\CNOTABSONE}
           {\ensuremath{[CNO1_{{\cal A}}]}}
\newcommand{\CNOTABSZERO}
           {\ensuremath{[CNO0_{{\cal A}}]}}
\newcommand{\random}
           {\ensuremath{\frac{\true}{\false}}}
\newcommand{\judge}[5]
           {\{#1\} #2 :^{#3} #4 \{#5\}}
\newcommand{\JDGLET}
           {[\pi_{J}]}
\newcommand{\JDGLOG}
           {[LOG_{J}]}
\newcommand{\JDGCNOTONE}
           {[CNOT1_{J}]}
\newcommand{\JDGMEAS}
           {[MEAS_{J}]}
\newcommand{\JDGCNOTTWO}
           {[CNOT2_{J}]}
\newcommand{\JDGHAD}
           {[HAD_{J}]}
\newcommand{\JDGPHASE}
           {[PHASE_{J}]}
\newcommand{\JDGVAR}
           {[VAR_{J}]}
\newcommand{\JDGCONST}
           {[CONST_{J}]}
\newcommand{\JDGIF}
           {[IF_{J}]}
\newcommand{\JDGAPP}
           {[APP_{J}]}
\newcommand{\JDGABS}
           {[ABS_{J}]}
\newcommand{\JDGCPL}
           {[\times_{J}]}
\newcommand{\JDGTRANS}
           {[\implies_J]}
\newcommand{\JDGPROM}
           {[promote]}
\newcommand{\JDGIMPEL}
           {[\implies ELim]}
\newcommand{\JDGETEL}
           {[\wedge Elim]}
\newcommand{\JDGOUL}
           {[\vee L]}
\newcommand{\JDGETR}
           {[\wedge R]}
\newcommand{\JDGEXL}
           {[\exists L]}
\newcommand{\JDGFORR}
           {[\forall R]}
\newcommand{\PROJ}
           {[\pi]}
\newcommand{\equobs}
           {\ensuremath{\equiv}}
\newcommand{\progsem}[2]
           {\llbracket #1\rrbracket_#2}
\newcommand{\model}
           {\ensuremath{{\cal M}}}
\newcommand{\interpretation}
           {\ensuremath{{\cal I}}}
\newcommand{\abstrvalset}[1]
           {\ensuremath{\Xi}_{#1}}
\newcommand{\interpret}[1]
           {\ensuremath{[\!\!\vert #1 \vert\!\!]}}
\newcommand{\TTAX}
           {[TAsAx]}
\newcommand{\TTQ}
           {[TAsQ]}
\newcommand{\TTPROD}
           {[TAs\otimes]}
\newcommand{\TTPI}
           {[TAs\pi_{i}]}
\newcommand{\ATEN}
           {[TAs\entangle]}
\newcommand{\ATPU}
           {[TAs\pure]}
\newcommand{\ATEQ}
           {[TAs=]}
\newcommand{\ATNEG}
           {[TAs\neg]}
\newcommand{\ATAN}
           {[TAs\wedge]}
\newcommand{\ATOR}
           {[TAs\vee]}
\newcommand{\ATIMP}
           {[TAs\implies]}
\newcommand{\ATFORALL}
           {[TAs\forall]}
\newcommand{\ATEXISTS}
           {[TAs\exists]}
\newcommand{\ATHIGORD}
           {[TAsEV]}
\newcommand{\THAD}
           {[HAD]}
\newcommand{\TPHA}
           {[PHA]}
\newcommand{\TMEAS}
           {[MEAS]}
\newcommand{\TCNOT}
           {[CNOT]}
\newcommand{\logic}
           {\ensuremath{\vdahs_{L}}}
\newcommand{\vrai}
           {\ensuremath{\mathsf{T}}}
\newcommand{\faux}
           {\ensuremath{\mathsf{F}}}
\newcommand{\equivclass}[1]
           {\widetilde{#1}}
\begin{document}

\title{A logical analysis of entanglement and separability
       in quantum higher-order functions}
\author{F. Prost \and C. Zerrari}
 
\institute{ LIG\\
            46, av F\'elix Viallet,
            F-38031 Grenoble, France\\
            \email{Frederic.Prost@imag.fr}}

\maketitle

\begin{abstract}
  We present a logical separability analysis for a functional quantum
  computation language. This logic is inspired by previous works on logical
  analysis of aliasing for imperative functional programs. Both
  analyses share similarities notably because they are highly
  non-com\-pos\-it\-ional. Quantum setting is harder to deal with since
  it introduces non determinism and thus considerably modifies
  semantics and validity of logical assertions. This logic is the
  first proposal of entanglement/separability analysis dealing with a
  functional quantum programming language with higher-order functions.       
\end{abstract}

\section{Introduction}
\label{sec:introduction}

The aim of high level programming language is to provide a
sufficiently high level of abstraction in order both to avoid
unnecessary burden coming from technical details and to provide useful
mental guidelines for the programmer. Quantum computation
\cite{NielChu00} is still in its prime and quantum programing
languages remain in need for such abstractions. Functional quantum
programing languages have been proposed and offer ways to handle the
no-cloning axiom via linear $\lambda$-calculi
\cite{Ton04,SelVal05}. In \cite{AltGra05} is developed QML in which a
purely quantum control expression is introduced in order to represent
quantum superposition in programming terms. Another crucial ingredient
of quantum computation is the handling of entanglement of quantum states during
computation. Indeed without entanglement it is possible to efficiently
simulate quantum computations on a classical computer \cite{Vid03}. A
first step to deal with entanglement, and its dual: separability, has
been done in \cite{Perdr07} in which a type system is provided in order to
approximate the entanglement relation of an array of quantum bits.

 Quantum bits entanglement analysis shares some similarities with
variables name aliasing analysis.Indeed, aliasing analyzes are
complicated since an action on a variable of a given name may have 
repercussions on another variable having a different name. The same
kind of problems occur between two entangled quantum bits : if one
quantum bit is measured then the other one can be affected. In both
cases there is a compositionality issue: it is hard to state anything
about a program without any knowledge of its context. It seems therefore
sensible to try to adapt known aliasing analysis techniques to the
quantum setting. 

 In this paper we follow the idea developed in \cite{BerHonYos05} and
adapt it for entanglement/separability analysis in a functional
quantum programing language with higher order functions. The work of
\cite{BerHonYos05} has to be adapted in a non deterministic setting,
which is inherent of quantum computation, making the semantics and
soundness of the logic radically different. Moreover, our results
are a strict improvement over \cite{Perdr07} in which only first order
functions are considered.

\subsection{outline of the paper}

We first start by giving the definition of the dual problems of
entanglement and separability, together with quick reminders on
quantum computation, in section \ref{sec:separability}. Then, in
section \ref{sec:functional}, we present a functional quantum
computation language in section for which we define an entanglement
logic in section \ref{sec:entanglement}. Finally, we conclude in
section \ref{sec:conclusion}.

\section{Separability and Entanglement}
\label{sec:separability}

 A $n$ qubits register is represented by a normalized vector in a Hilbert 
$2^n$-dimension space that is the tensorial product of $n$ dimension $2$ 
Hilbert spaces on $\C^{2}$. Each 2 dimension subspace represents a
qubit. For a given vector, written $\ket{\varphi}$, qubits can be
either entangled or separable.

  \begin{definition}[Entanglement, Separability]
     \label{def:entangle}
     Consider $\ket{\varphi}$ a $n$ qubits register. $\varphi$ is
     separable if it is possible to partition the $n$ qubits in two
     non empty sets $A, B$, two states $\ket{\varphi_A}$ and
     $\ket{\varphi_B}$ describing $A$ and $B$ qubits, such that $\ket{\varphi}
     = \ket{\varphi_A} \otimes \ket{\varphi_B}$, where
     $\ket{\varphi_A}$ and $\ket{\varphi_B}$,otherwise it is said
     entangled. 

     By extension, two qubits $q,q'$ are separable if and only if
     there exists a partition $A, B$, two states  $\ket{\varphi_A}$ and
     $\ket{\varphi_B}$ describing $A$ and $B$ qubits,  such that $\ket{\varphi}
     = \ket{\varphi_A} \otimes \ket{\varphi_B}$, with $q \in A$ and $q' \in
     B$. Otherwise $q, q'$ are entangled. 
  \end{definition}

  \begin{definition}[Entanglement relation]
    \label{def:entangle_relation}
    Let a $n$ qubits register be represented by $\ket{\varphi}$. The
    entanglement relation of $\ket{\varphi}$,
    $\entangled(\ket{\varphi})$, over qubits of the register is
    defined  as follows: $(x,y) \in \entangled(\ket{\varphi})$ if and
    only if $x$ and $y$ are entangled. 
  \end{definition}

   The entanglement relation is an equivalence relation. It is indeed
   obviously symmetric and reflexive. It is transitive because if $(x,y)
   \in \entangled(\ket{\varphi})$
   and $(y,z)\in \entangled(\ket{\varphi})$. It is possible to find
   a partition $X,Z$ (with $x \in X$ and $z \in Z$) and
   $\ket{\varphi_X},\ket{\varphi_Z}$ such that $\ket{\varphi}=
   \ket{\varphi_X} \otimes \ket{\varphi_Z}$. $y$ is either in $X$ or
   $Y$ then either $(x,y)$ or $(y,z)$ is not in $
   \entangled(\ket{\varphi})$, thus the result by contradiction.

\section{$\lambdaq$ a functional quantum computing language}
\label{sec:functional}

We use a variant of Selinger and Valiron's $\lambda$-calculus
\cite{SelVal05} as programming language. Instead of considering
arbitrary unitary transformations we only consider three: quantum
phase $\phase$, Hadamard transformation \hadamard, and conditional not
\cnot. This restriction doest not make our language less general since
it forms a universal quantum gates set, see \cite{NielChu00}. It makes
entanglement analysis simpler. Indeed, only \cnot\ may create
entanglement. We also introduce another simplifications: since the
calculus may be linear only for quantum bits we do not use all the
linear artillery (bang, linear implications etc) but only check that
abstractions over quantum bits are linear. Moreover we suppose a fixed
number of quantum bits, therefore there are no $\newq$ operators
creating new quantum bits during computation. Indeed as shown in
\cite{Pro07} name generation creates nontrivial problems.

  Therefore, in the following we suppose the number of quantum bit
registers fixed although non specified and refer to it as $n$. 
  
  \subsection{Syntax and types}
 
  \begin{definition}[Terms and types]
  \label{def:terms}
    \lambdaq terms and types are inductively defined by:
    $$\begin{array}{rcl}
        M,N,P & ::= & x \mid q_{i} \mid \true \mid \false \mid \abstr{x}{M}{N} \mid \app{M}{N}  \\
              &     & \mid \true \mid \false \mid \couple{M,N}  
                                        \mid \proj{i}{N}
                                  \mid \ifthls{M}{N}{P} \mid \\ 
              &     &  \meas \mid 
                       \cnot \mid \hadamard \mid \phase\\ \\
              
        \sigma,\tau & ::= & \bits \mid \qbits \mid \sigma \to \tau  
                        \mid \sigma \tensor \tau \\
      \end{array}$$

      where $x$ denotes names of element of a countable set of
      variables. $q_i$, where $i \in \{1..n\}$ are constant names that
      are used as reference for a concrete quantum bit array. $\true,
      \false$ are standard boolean constant. $\proj{i}{N}$ with $i\in
      \{1,2\}$ is the projection operator. Terms of the third line
      are quantum primitives respectively for measure, quantum bit
      initialization and the three quantum gates Conditional not,
      Hadamard and phase.

      We only have two base types $\bits$ for bits and $\qbits$ for
      quantum bits, arrow and product types are standard ones.
  \end{definition}

 Note that if quantum bits are constants, there can be quantum bits
 variable in this $\lambdaq$. Consider for instance the following
 piece of code: $\app{\abstr{x}{M}}{\ifthls{M}{q_1}{q_2}}$. After
 reduction $x$ may eventually become either $q_1$ or $q_2$. We write
 $q$ without subscript to denote quantum bit variables. 

  \begin{definition}[Context and typing judgments]
  \label{def:typing}
     Contexts are inductively defined by:
     $$ \Gamma ::= \cdot \mid \Gamma, x:\sigma$$
     where $\sigma$ is not $\qbits$. 

     We define lists of quantum bits variable by:
     $$\Lambda ::= \cdot \mid \Lambda,q$$

     Typing judgments are of the form:
     $$\Gamma; \Lambda \vdash M : \sigma$$
     and shall be read as : under the typing context $\Gamma$, list of
     quantum bits variable $\Lambda$ , the term $M$ is well formed of type 
     $\sigma$. 
  \end{definition}

  As usual we require that typing contexts and lists are
unambiguous. It means that when we write $\Gamma,x:\sigma$
(resp. $\Lambda,q$) $x$ (resp. $q$) is implicitly supposed not to
appear in $\Gamma$ (resp. $\Lambda$). Similarly when we write
$\Gamma_{1}, \Gamma_{2}$ (resp. $\Lambda_{1}, \Lambda_{2}$) we
intend that $\Gamma_{1}$ and $\Gamma_{2}$ (resp. $\Lambda_{1}$ and
$\Lambda_{2}$) are disjoint contexts.

  Typing rules are the following :

  $$ \inferencerule{}
                   {\Gamma;\Lambda \vdash q_i:\qbits}
                   {\AXC}$$

  $$\begin{array}{ccc}
      \inferencerule{}
                    {\Gamma; q \vdash q:\qbits}
                    {\AXQ}  &\et \et &
      \inferencerule{}
                    {\Gamma,x:\sigma; \cdot \vdash x:\sigma}
                    {\AX} \\ \\
      \inferencerule{}
                    {\Gamma;\cdot \vdash \true : \bits}
                    {\AXCT} & &
      \inferencerule{}
                    {\Gamma;\cdot \vdash \false : \bits}
                    {\AXCF}\\
     \end{array}$$

     $$\inferencerule{\Gamma; \cdot \vdash M:\sigma}
                    {\Gamma,x:\tau; \cdot \vdash M:\sigma}
                    {\WEAK}$$

  $$\begin{array}{cc}
      \inferencerule{\Gamma, x:\sigma ; \Lambda \vdash M : \tau}
                    {\Gamma; \Lambda \vdash \abstr{x}{\sigma}{M} : 
                     \sigma \to \tau}
                    {\ARI}  &
      \inferencerule{\Gamma; \Lambda \vdash M: \sigma \to \tau \et 
                           \Gamma;\cdot \vdash N :\sigma}
                    {\Gamma;\Lambda \vdash \app{M}{N} : \tau}
                    {\ARE} \\ \\
     \inferencerule{\Gamma; \Lambda,q \vdash M : \tau}
                    {\Gamma; \Lambda \vdash \abstr{q}{\qbits}{M} : 
                           \qbits \to \tau}
                    {\ALI} &
      \inferencerule{\Gamma; \Lambda_{1} \vdash M: \sigma \to \tau \et 
                           \Gamma;\Lambda_{2} \vdash N :\sigma}
                    {\Gamma;\Lambda_1,\Lambda_{2} \vdash \app{M}{N} : \tau}
                    {\ALE} \\
    \end{array}$$

$$\begin{array}{c}
%
      \inferencerule{\Gamma; \Lambda_{1} \vdash M: \tau \et 
                           \Gamma;\Lambda_{2} \vdash N :\sigma}
                    {\Gamma;\Lambda_1,\Lambda_{2} \vdash 
                           \couple{M,N} : \tau \tensor \sigma}
                    {\TII} \\ \\
      \inferencerule{ \Gamma; \Lambda_{1} \vdash M : \bits \et 
                      \Gamma; \Lambda_{2} \vdash N:\tau \et 
                      \Gamma; \Lambda_{2} \vdash P:\tau}
                    {\Gamma; \Lambda_{1},\Lambda_2 \vdash \ifthls{M}{N}{P} : \tau}
                    {\IFI}\\ \\
      \inferencerule{\Gamma'; \Lambda \vdash M: \tau_1 \tensor \tau_2}
                    {\Gamma; \Lambda \vdash \proj{i}{M} :\tau_i}
                    {\TIE i} i\in \{1,2\} \\ \\
      \inferencerule{\Gamma;\Lambda \vdash M:\qbits}
                    {\Gamma;\Lambda \vdash \hadamard\ M :\qbits}
                    {\THAD} \\ \\
      \inferencerule{\Gamma;\Lambda \vdash M:\qbits}
                    {\Gamma;\Lambda \vdash \phase\ M:\qbits}
                    {\TPHA} \\ \\
      \inferencerule{\Gamma;\Lambda \vdash M:\qbits}
                    {\Gamma;\Lambda \vdash \meas\ M:\bits}
                    {\TMEAS}\\ \\
      \inferencerule{\Gamma;\Lambda \vdash M:\qbits \otimes \qbits}
                    {\Gamma;\Lambda \vdash \cnot\ M:\qbits \otimes \qbits}
                    {\TCNOT}
    \end{array}$$

Where in rule \TIE\, $\Gamma'=\Gamma,x:\sigma,y:\tau$ if $\sigma$ and
$\tau$ are not $\qbits$, $\Gamma'=\Gamma,x:\sigma$
(resp. $\Gamma,y:\tau$) if $\tau$ (resp. $\sigma$) is $\qbits$ and
$\sigma$ (resp. $\tau$) is not $\qbits$. $\Lambda_{2}'$ is build in a
symmetrical way, thus $\Lambda_{2}'$ is $\Lambda_{2}$ augmented with
variables $x$ or $y$ if and only if their type is $\qbits$.   

   $\lambdaq$ is a standard simply typed $\lambda$-calculus with two
base types which is linear for terms of type $\qbits$. Thus we ensure
the no-cloning property of quantum physics (e.g. \cite{NielChu00}).  

  \subsection{Operational semantics}
 
  Quantum particularities have strong implications in the design of a
quantum programming language. First, since quantum bits may be
entangled together it is not possible to textually represent
individual quantum bits as a normalized vector of $\C^{2}$. We use
$\ket{\true}$ and $\ket{\false}$ as base. Therefore, a quantum
program manipulating $n$ quantum bits is represented as a quantum
state of a Hilbert space $\C^{2n}$ and constants of type $\qbits$
are pointers to this quantum state. Moreover, quantum operators
modify this state introducing imperative actions. As a
consequence an evaluation order has to be set in order to keep some
kind of confluence. Moreover, $\lambdaq$ reductions are
probabilistic. Indeed, quantum mechanics properties induce an
inherent probabilistic behavior when measuring the state of a quantum
bit. 

\begin{definition}[$\lambdaq$ state]
  \label{def:lambdaqstate}
  Let $\Gamma;\Lambda \vdash M: \sigma$. A $\lambdaq$ state is a couple $[\ket{\varphi},M]$.where
$\ket{\varphi}$ is a normalized vector of $\C^{2n}$ Hilbert space and
$M$ a $\lambdaq$ term. 
\end{definition}  

  An example of $\lambdaq$ state of size $n=2$ is the following:
  $$[\ket{\varphi}, 
     \app{\abstr{q}{\qbits}
         {\ifthls{\app{\meas}{q_1}}
                {\true}
                {\app{\meas}{\app{\phase}{q}}}}}{q_2}]$$  
where $\ket{\varphi}=\frac{1}{\sqrt{2}}(\ket{\false}+\ket{\true}) \tensor 
(\frac{2}{3}\ket{\false}+ \frac{\sqrt{5}}{3}\ket{\true})$ $q_1$ is the
quantum bit denoted by $\frac{1}{\sqrt{2}}(\ket{\false}+\ket{\true})$
and $q_2$ the one represented by $\frac{2}{3}\ket{\false}+
\frac{\sqrt{5}}{3}\ket{\true}$. 

  We consider call by value reduction rules. Values are defined as
usual.   

\begin{definition}[Values]
\label{def:values}
   Values of $\lambdaq$ are inductively defined by:
   $$U,V::= x \mid \true \mid \false\mid q_{i} \mid
   \abstr{x}{\sigma}{M} \mid \couple{V,V}\mid \app{F}{x}$$

   Where $F$ is one of the following operators $ \pi_i, \cnot, \phase,
   \hadamard, \meas$
\end{definition}

  We can now define probabilistic reduction rules. We only mention
probabilities to be accurate although we are not going to investigate
any related problems in this paper (we do not consider confluence
problems etc.). 

\begin{definition}[Quantum reductions]
\label{def:quantumreduction}
  We define a probabilistic reduction between $\lambdaq$ states 
as: 
  $$[\ket{\varphi},M] \redprob{p} [\ket{\varphi'},M']$$
That has to be red $[\ket{\varphi},M]$ reduces to 
$[\ket{\varphi'},M']$ with probability $p$.

Reduction rules are the following:
$$\begin{array}{c}
\inferencerule{}
              {[\ket{\varphi}, \app{\abstr{x}{\sigma}{M}}{V}] 
                    \redprob{1} [\ket{\varphi},M\{x:=V\}]}
              {\BETAV} \\ \\
\inferencerule{[\ket{\varphi},N] 
                    \redprob{p} [\ket{\varphi'},N']}
              {[\ket{\varphi}, \app{M}{N}] 
                    \redprob{p} [\ket{\varphi'},\app{M}{N'}]}
              {\BETA} \\ \\
\inferencerule{[\ket{\varphi},N] 
                    \redprob{p} [\ket{\varphi'},N']}
              {[\ket{\varphi}, \app{M}{N}] 
                    \redprob{p} [\ket{\varphi'},\app{M}{N'}]}
              {\APP} \\ \\
\inferencerule{[\ket{\varphi},M] 
                    \redprob{p} [\ket{\varphi'},M']}
              {[\ket{\varphi}, \app{M}{V}] 
                    \redprob{p} [\ket{\varphi'},\app{M'}{V}]}
              {\APPC} \\ \\
\inferencerule{}
              {[\ket{\varphi}, \ifthls{\true}{M}{N}] 
                    \redprob{1} [\ket{\varphi},M]}
              {\IFT} \\ \\
\inferencerule{[\ket{\varphi},P] 
                    \redprob{p} [\ket{\varphi'},P']}
              {[\ket{\varphi}, \ifthls{P}{M}{N}] 
                    \redprob{p} [\ket{\varphi},\ifthls{P'}{M}{N}]}
              {\IF} \\ \\
\inferencerule{}
              {[\ket{\varphi}, \ifthls{\false}{M}{N}] 
                    \redprob{1} [\ket{\varphi},N]}
              {\IFF} \\ \\
\inferencerule{i \in \{1,2\}}
              {[\ket{\varphi}, \proj{i}{\couple{V_1,V_2}}]
                    \redprob{1} [\ket{\varphi},V_i]}
              {\PROJ i} \\ \\
\inferencerule{[\ket{\varphi},M] \redprob{p} [\ket{\varphi'},M']}
              {[\ket{\varphi}, \couple{M,N}]
                    \redprob{p} [\ket{\varphi'},\couple{M',N}]}
              {\CPLL} \et \et
\inferencerule{[\ket{\varphi},N] \redprob{p} [\ket{\varphi'},N']}
              {[\ket{\varphi}, \couple{V,N}]
                    \redprob{p} [\ket{\varphi'},\couple{V,N'}]}
              {\CPLR} \\ \\ 

\inferencerule{}
              {[\ket{\varphi}, \app{\phase}{q_{i}}]
                    \redprob{1} [\phase^{i}(\ket{\varphi}),q_{i}]}
              {\PHASE} \et \et
\inferencerule{}
              {[\ket{\varphi}, \app{\hadamard}{q_{i}}]
                    \redprob{1} [\hadamard^{i}(\ket{\varphi}),q_{i}]}
              {\HADA} \\ \\
\inferencerule{}
              {[\alpha \ket{\varphi_{\false}} + \beta \ket{\varphi_{\true}}, \app{\meas}{q_{i}}]
                    \redprob{\vert \alpha \vert^2} [\ket{\varphi_{\false}},\true]}
              {\MESF} \\ \\
\inferencerule{}
              {[\alpha \ket{\varphi_{\false}} + \beta \ket{\varphi_{\true}}, \app{\meas}{q_{i}}]
                    \redprob{\vert \beta\vert^2} [\ket{\varphi_{\true}},\false]}
              {\MEST} \\ \\

\inferencerule{}
              {[\ket{\varphi}, \app{\cnot}{\couple{q_i}{q_j}}]
                    \redprob{1} [\cnot^{i,j}(\ket{\varphi}),\couple{q_i,q_j}]}
              {\CNOT} \\ \\

\end{array}$$

\end{definition}

   In rules $\MEST$ and $\MESF$, let $\ket{\varphi} = \alpha \ket{\varphi_{\false}} +
   \beta \ket{\varphi_{\true}}$ be normalized with 
    $$\begin{array}{c}
             \ket{\varphi_{\true}} = \sum_i=1^{n} \alpha_{i} \ket{\phi_i^{\true}} 
                  \otimes \ket{\true} \otimes \ket{\psi_i^{\true}} \\
             \ket{\varphi_{\false}} = \sum_i=1^{n} \beta_{i} \ket{\phi_i^{\false}} 
                  \otimes \ket{\false} \otimes \ket{\psi_i^{\false}}
      \end{array}$$. 
   where $\ket{\true}$ and $\ket{\false}$ is the ith quantum  bit.  

   We say that the set of rules containing $\BETA$, $\BETAV$, $\APP$,
$\APPC$, $\APPV$, $\IF$, $\IFF$, $\IFT$, $\PROJ i$, $\CPLL$, $\CPLR$ is the purely
functional part of $\lambdaq$, the other rules are the quantum part
of $\lambdaq$. 

   Based on this reduction rules one can define reachable states, by
considering the reflexive-transitive closure of $\redprob{p}$. One has
to compose probabilities along a reduction path. Therefore
$[\ket{\varphi'},M']$ is reachable from  $[\ket{\varphi'},M']$, if 
there is a non zero probability path between those states. More
precisions can be found in \cite{SelVal05}. 

   Computations of a $\lambdaq$ term are done from an initial state
where all registers are set to $\ket{\false}$: $\ket{\varphi_{\false}} = \ket{\false} 
       \stackrel{n-1}{\overbrace{\otimes \ldots \otimes}} 
    \ket{\false}$

  \begin{proposition}[Subject Reduction]
    \label{prop:subject_reduction}
    Let $\Gamma,\Lambda \vdash M:\tau$ and $M \redprob{p} M'$, 
    then $\Gamma,\Lambda \vdash M':\tau$
  \end{proposition}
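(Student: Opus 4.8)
The plan is to proceed by induction on the derivation of the reduction step $[\ket{\varphi},M] \redprob{p} [\ket{\varphi'},M']$, with a case analysis on the last rule applied. Since no typing judgment mentions the quantum state $\ket{\varphi}$, the statement reduces to showing that the pure term rewriting underlying each rule preserves types; the probability $p$ and the state transformation $\phase^{i}$, $\hadamard^{i}$, $\cnot^{i,j}$ play no role. Before the main induction I would establish two auxiliary lemmas. The first is a \emph{generation} (inversion) lemma, reading off from the shape of a term which rules could have produced a judgment for it — e.g. that $\Gamma;\Lambda \vdash \app{M}{N}:\tau$ must come from \ARE\ or \ALE, so there is $\sigma$ with $\Gamma;\Lambda_1 \vdash M:\sigma\to\tau$ and the argument typed either in the empty quantum list (\ARE) or in a disjoint $\Lambda_2$ with $\Lambda=\Lambda_1,\Lambda_2$ (\ALE); similarly for abstractions, pairs, conditionals, projections and the quantum primitives \hadamard, \phase, \meas, \cnot. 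This is essentially routine since the rules are syntax-directed, the only non-directed rule being \WEAK, which merely tacks unused classical hypotheses onto $\Gamma$ (and only when the quantum list is empty); I would fold a small admissible form of weakening on $\Gamma$ into the generation lemma so the main proof never reasons about \WEAK\ directly, and likewise absorb the context rewriting attached to \TIE\ there.

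The second, and central, auxiliary result is a \emph{substitution lemma} in two flavours matching the two abstraction rules. (i) Classical substitution: if $\Gamma,x:\sigma;\Lambda \vdash M:\tau$ with $\sigma\neq\qbits$ and $\Gamma;\cdot \vdash V:\sigma$, then $\Gamma;\Lambda \vdash M\{x:=V\}:\tau$. (ii) Quantum substitution: if $\Gamma;\Lambda_1,q \vdash M:\tau$ and $\Gamma;\Lambda_2 \vdash V:\qbits$ with $\Lambda_1,\Lambda_2$ disjoint, then $\Gamma;\Lambda_1,\Lambda_2 \vdash M\{q:=V\}:\tau$. Both are proved by induction on the typing derivation of $M$, the delicate cases being those where the linear list is split (\ALE, \TII, \IFI): one must check that $q$ occurs in exactly one branch, perform the substitution there, and recombine the lists, appealing to disjointness throughout; the variable and \WEAK\ base cases use the admissible weakening from the first lemma.

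With these in hand the main induction is short. The congruence rules \BETA, \APP, \APPC, \IF, \CPLL, \CPLR\ follow immediately by applying the induction hypothesis to the reduced subterm (for every typing it admits) and re-assembling the surrounding typing context unchanged — the generation lemma supplies the premise typings and tells us which of \ARE/\ALE, \TII, \IFI\ to reapply. The rule \BETAV\ is handled by the generation lemma, which exposes the abstraction body typed via \ARI\ or \ALI\ and the value's typing, followed by the matching clause of the substitution lemma according to whether the bound variable has type $\qbits$ or not. The rules \IFT, \IFF\ and \PROJ\ are immediate from generation, since the contractum is one of the already-typed subterms and generation delivers it at the required type. Finally, for the quantum rules \PHASE, \HADA, \MEST, \MESF, \CNOT\ the contractum is respectively $q_i$, $\true$/$\false$, or $\couple{q_i,q_j}$, each typable by \AXC, by \AXCT/\AXCF, or by \TII\ over two uses of \AXC, at exactly the type ($\qbits$, $\bits$, $\qbits\tensor\qbits$) that \TPHA, \THAD, \TMEAS, \TCNOT\ assign to the redex — so preservation is trivial there.

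The step I expect to be the main obstacle is clause (ii) of the substitution lemma: keeping the bookkeeping of the linear lists $\Lambda_i$ consistent through the binary rules, and coping with the fact that a value of type $\qbits$ need not be a bare variable or constant (it may be $\app{\phase}{x}$, $\app{\hadamard}{x}$, or a projection whose argument is a quantum variable), so the substituted value can itself carry a non-trivial quantum list that must be merged disjointly with $\Lambda_1$. A secondary annoyance is the asymmetry between \ARE\ (argument typed with an empty quantum list) and \ALE, which is what forces the two-flavour split and requires a little care in the \BETAV\ case to invoke the right clause; once that dichotomy is set up cleanly, the remaining cases are mechanical.
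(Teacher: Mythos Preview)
Your proposal is correct and is precisely the standard, fully spelled-out subject-reduction argument that the paper only gestures at: the paper's three-sentence proof merely observes that $\lambdaq$ is a simply typed $\lambda$-calculus with constants, that $\phase$, $\hadamard$, $\cnot$ behave as type-level identities, and that $\meas$ returns a constant typable anywhere. Your generation and two-flavour substitution lemmas, together with the rule-by-rule case analysis, are exactly the machinery one would unfold to make that sketch rigorous, so the approaches coincide in substance if not in detail.
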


  \begin{proof}
    From the typing point of view $\lambdaq$ is nothing more than a
    simply typed $\lambda$-calculus with constants for quantum bits
    manipulations. Note that $\phase,\hadamard,\cnot$ act as identity
    functions (from the strict $\lambda$-calculus point of view). The
    measurement is simple to deal with since it only returns constant
    (hence typable in any contexts).   
  \end{proof}

\section{Entanglement logic for $\lambdaq$}
  \label{sec:entanglement}

We present a static analysis for the study of the entanglement
relation during a quantum computation. The idea that we follow in this
paper is to adapt the work \cite{BerHonYos05} to the quantum
setting. The logic is in the style of Hoare \cite{Hoa69} and leads to the
following notation:
  $$\assertion{C}{M:^{\Gamma;\Lambda,}u}{C'} $$
where $C$ is a precondition, $C'$ is a post-condition, $M$ is the
subject and $u$ is its anchor (the name used in $C'$ to denote $M$
value).  Informally, this judgment can be red: if $C$ is
satisfied, then after the evaluation of $M$, whose value is denoted by
$u$ in $C'$, $C'$ is satisfied. $\Gamma;\Lambda$ is the typing context
of $M$ and $\Delta$ is the anchor typing context : it is used in order
to type anchors within assertions. Indeed, anchors denote terms and
have to be typed.

Since we are interested in separability analysis,
assertions state whether two quantum bits are entangled or
not. Moreover, since separability is uncomputable (it trivially reduces to the
halt problem since on can add $\cnot(q_i,q_j)$ as a last line of a
program in such a way that $q_i$ and $q_j$ are entangled iff the
computation stops), assertions are safe approximations: if an assertion
state that two quantum bits are separable then they really are,
whereas if two quantum bits are stated entangled by an assertion, it
is possible that in reality they are not.  

  \subsection{Assertions}
    \label{subsec:assretions}
 
  \begin{definition}
    \label{def:assertions}
    Terms and assertions are defined by the following grammar:
   $$\begin{array}{lrcl}
          e,e' &::=& u \mid q_i\mid \couple{e,e'} \mid \pi_i(e)\\ \\
          C,C' &::=& u \entangle v 
                                 \mid \pure e \mid  e = e' \\
               &   &  \neg C \mid C \vee C' \mid C \wedge
                    C' \mid C \implies C' \mid \forall u.C \mid
                    \exists u.C \\
               &   & \assertion{C}{\evaluation{e_1}{e_2}{e_3}}{C'}
      \end{array}$$

    Where $u,v$ are names from a countable set of anchor names. 
  \end{definition}

The idea behind assertions is the following: every subterm of a
program is identified in assertions by an anchor, which is simply a
unique name. The anchor is the logical counterpart of the
program. Note that the name of quantum bits are considered as ground terms.

Assertion $u \entangle v$ means that the quantum bit
identified by $u$ is possibly entangled with $v$. Notice that $\neg u
\entangle v$ means that it is sure that $u$ and $v$ are
separable. $\pure u$ means that it is for sure that the quantum bit is
in a base state (it can be seen as $\alpha \ket{b}$ where $b$ is
either $\true$ or $\false$). Thus $\neg \pure u$ means that $u$ may
not be in a base state (here the approximation works the other
around).  Assertion $\assertion{C}{\evaluation{e_1}{e_2}{e_3}}{C'}$ is
used to handle higher order functions. It is the evaluation
formula. $e_3$ binds its free occurrences in $C'$. following
\cite{BerHonYos05}, $C,C'$ are called internal pre/post
conditions. The idea is that invocation of a function denoted by $e_1$
with argument $e_2$ under the condition that the initial assertion $C$
is satisfied by the current quantum state evaluates in a new quantum
state in which $C'$ is satisfied. $C'$ describes the new entanglement
and purity relations.

The other assertions have their standard first order logic
meaning. Notice that in $\forall$ and $\exists$ binder are only meant
to be used on quantum bits. That is $\forall u.C$ means that $u$ is
either of the form $q_i$ or of the form $x$, with $x$ of type $\qbits$
but cannot be of the form $\couple{e,e'}$. 

  In the following we $\vrai$ (resp. $\faux$) for the following
tautology (resp. antilogy) $u=u$ (resp $\neg (u=u)$).

  \begin{definition}[Assertion typing]
  \label{def:asertiontype}
  \begin{itemize}
    \item A logical term $t$ is well typed of type $\tau$, written
    $\Gamma; \Lambda; \Delta \vdash t:\tau$ if it can be derived from
    the following rules:
    $$\begin{array}{c} 
       \inferencerule{(u:\tau) \in \Gamma; \Lambda; \Delta}
                     {\Gamma; \Lambda; \Delta \vdash u : \tau}
                     {\TTAX} \\ \\
       \inferencerule{}
                     {\Gamma; \Lambda; \Delta \vdash q_i : \qbits}
                     {\TTQ} \\ \\
       \inferencerule{\Gamma; \Lambda; \Delta \vdash e : \tau \et
                      \Gamma; \Lambda; \Delta \vdash e' : \tau'}
                     {\Gamma; \Lambda; \Delta \vdash \couple{e,e'} :
                       \tau \otimes \tau'}
                     {\TTPROD} \\ \\
       \inferencerule{\Gamma; \Lambda; \Delta \vdash u : \tau_1 \otimes \tau_2}
                     {\Gamma; \Lambda; \Delta \vdash \pi_{i}(u) : \tau_{i}}
                     {\TTPI} \\               
      \end{array}$$
      with $ i \in \{1,2\}$. 

    \item An assertion $C$ is well typed under context
    $\Gamma; \Lambda; \Delta$ written $\Gamma; \Lambda; \Delta \vdash
    C$ if it can be derived from the following rules: 
    $$\begin{array}{c}
       \inferencerule{\Gamma; \Lambda; \Delta \vdash e: \qbits \et
                      \Gamma; \Lambda; \Delta \vdash e': \qbits }
                     {\Gamma; \Lambda; \Delta \vdash e \entangle e'}
                     {\ATEN} \\ \\
       \inferencerule{\Gamma; \Lambda; \Delta \vdash e: \qbits}
                     {\Gamma; \Lambda; \Delta \vdash \pure e}
                     {\ATPU} \\ \\
       \inferencerule{\Gamma; \Lambda; \Delta \vdash e:\tau \et
                       \Gamma; \Lambda; \Delta \vdash e':\tau}
                     {\Gamma; \Lambda; \Delta \vdash e=e'}
                     {\ATEQ} \\ \\
       \inferencerule{\Gamma; \Lambda; \Delta \vdash C}
                     {\Gamma; \Lambda; \Delta \vdash \neg C}
                     {\ATNEG} \\ \\
       \inferencerule{\Gamma; \Lambda; \Delta \vdash C \et 
                      \Gamma; \Lambda; \Delta \vdash C'}
                     {\Gamma; \Lambda; \Delta \vdash C \wedge C'}
                     {\ATAN} \\ \\
       \inferencerule{\Gamma; \Lambda; \Delta \vdash C \et 
                      \Gamma; \Lambda; \Delta \vdash C'}
                     {\Gamma; \Lambda; \Delta \vdash C \vee C'}
                     {\ATOR} \\ \\ 
       \inferencerule{\Gamma; \Lambda; \Delta \vdash C \et 
                      \Gamma; \Lambda; \Delta \vdash C'}
                     {\Gamma; \Lambda; \Delta \vdash C \implies C'}
                     {\ATIMP} \\ \\ 
       \inferencerule{\Gamma; \Lambda; \Delta,u:\qbits \vdash C}
                     {\Gamma; \Lambda; \Delta \vdash \forall u.C}
                     {\ATFORALL} \\ \\ 
       \inferencerule{\Gamma; \Lambda; \Delta,u:\qbits \vdash C}
                     {\Gamma; \Lambda; \Delta \vdash \exists u.C}
                     {\ATEXISTS} \\ \\
       \inferencerule{\Gamma; \Lambda,\Lambda'; \Delta \vdash C  \et
                      \Gamma; \Lambda,\Lambda'; \Delta,e3 : \tau \vdash C' \et
                      \superpose{\Gamma; \Lambda'; \Delta \vdash e2 : \sigma}
                                {\Gamma; \Lambda; \Delta \vdash e1 : \sigma \to \tau}
                               }
                     {\Gamma; \Lambda,\Lambda'; \Delta \vdash \{C\}
                            \evaluation{e1}{e2}{e3}\{C'\}}
                     {\ATHIGORD} 
      \end{array}
    $$
  \end{itemize}
  \end{definition}

  Assertion typing rules may be classified in two categories. The
first one is the set of rules insuring correct use of names with
respect to the type of the term denoted by them. It is done by rules
$\ATEN$ $\ATPU$ $\ATEQ$ $\ATFORALL$ $\ATEXISTS$ and $\ATHIGORD$. The
second set of rules is used to structurally check formulas: $\ATNEG$
$\ATAN$ $\ATOR$, and $\ATIMP$. 

  \subsection{Semantics}
    \label{subsec:semantics}

   We now formalize the intuitive semantics of assertions. For this, we
abstract the set of quantum bits to an abstract quantum state. The
approximation (we are conservative in saying that two quantum bits are
entangled and in stating the non-purity of a quantum bits) is done at
this level. It means that for a given quantum state there are several
abstract quantum state acceptable. For instance stating that all
quantum bits are entangled, and not one of them is in a base state, which is
tautological, holds as an acceptable abstract quantum state for any actual
quantum state. The satisfaction of an assertion is done relatively to
the abstract operational semantics. We develop an abstract operational
semantics in order to abstractly execute $\lambdaq$ programs.

     \subsubsection{Abstract quantum state and abstract operational
       semantics} 
  
  Let the fixed set of $n$ quantum bits be named $S$ in the following of
  this section. Let also suppose that the quantum state of $S$ is
  described by $\ket{\varphi}$ a normalized vector of $\C^{2}$.  

  \begin{definition}[Abstract quantum state]
    \label{def:abstract}
    An abstract quantum state of $S$ (AQS for short) is a tuple
    $A=(\relentangle,\purestate)$ where $\purestate \subseteq S$ and
    $\relentangle$ is a partial equivalence relation on $(S\setminus
    \purestate) \times (S \setminus \purestate)$.
  \end{definition}

  Relation $\relentangle$ is a PER since it describes an approximation
  of the entanglement relation and there is not much sens in talking
  about the entanglement of a quantum bit with itself. Indeed because
  of the no-cloning property it is not possible to have programs $p :
  \qbits\times \qbits \to \tau$ requiring two non entangled quantum
  bits and to type $(p \ \ \couple{q_i,q_i})$. 

  The equivalence class of a quantum bit $q$ with relation to an
  abstract quantum state $A=(\relentangle,\purestate)$ is written 
  $\overline{q}^{A}$.

  \begin{definition}[AQS and quantum state adequacy]
    \label{def:adequacy}
    Let $S$ be described by $\ket{\varphi}$ and
    $A=(\relentangle,\purestate)$ an AQS of $S$. $A$ is adequate with
    regards to $\ket{\varphi}$, written $A \models \ket{\varphi}$, iff
    for every $x,y \in S$ such that $(x,y) \not \in \relentangle$ then
    $x,y$ are separable w.r.t. $\ket{\varphi}$ and for every $x \in
    \purestate$ then the measurement of $x$ is deterministic.
  \end{definition}

  Suppose that $S=\{q_1,q_2,q_3\}$ and 
  $\ket{\varphi} = 1/\sqrt(2) (\ket{\false}+\ket{\true}  \otimes 
                   1/\sqrt(2) (\ket{\false}+\ket{\true}  \otimes
                   \ket{\true}$
  then: 
  \begin{itemize}
    \item $A=(\{(q_1,q_2),(q_2,q_1)\},\{q_3\})$ 
    \item $A'=(\{(q_1,q_2),(q_2,q_1),(q_2,q_3),(q_3,q_2),(q_3,q_1),(q_1,q_3)\},\emptyset)$
  \end{itemize}
  are such that $A\models \ket{\varphi}$ and
  $A'\models\ket{\varphi}$. On the other hand:
  \begin{itemize}
    \item $B=(\{(q_1,q_2),(q_2,q_1)\},\{q_2,q_3\})$
    \item $B'=(\emptyset,\{q_3\})$
  \end{itemize}
  are not adequate abstract quantum states with relation to $\ket{\varphi}$. 

   We now give a new operational semantics of $\lambdaq$ terms based
on abstract quantum states transformation.  

  \begin{definition}[Abstract operational semantics]
    \label{def:abstropsem}
    We define an abstract operational semantics of a term $M$ such that 
    $\Gamma;\Lambda\vdash M : \tau$ between AQS as :
      $$[A,M]\redprobabstr^{\Gamma,\Lambda}[A',M']$$

    We often write $\redprobabstr$ instead of
    $\redprobabstr^{\Gamma,\Lambda}$ when typing contexts play no role
    or can be inferred from the context. 
    
    Reduction rules are the same ones as those of definition
    \ref{def:quantumreduction} for the functional part of the calculus
    where the quantum state is replaced with an
    abstract state. We have the following rules for the quantum actions:
    $$\begin{array}{ccc}
      \inferencerule{}
              {[(\relentangle,\purestate), \app{\phase}{q_{i}}]
                    \redprobabstr [(\relentangle,\purestate),q_{i}]}
              {\PHASEABS} \\ \\
       \inferencerule{}
              {[(\relentangle,\purestate), \app{\hadamard}{q_{i}}]
                    \redprobabstr [(\relentangle,\purestate \setminus \{q_{i}\}),q_{i}]}
              {\HADAABS} \\ \\
       \inferencerule{}
              {[(\relentangle,\purestate), \app{\meas}{q_{i}}]
                    \redprobabstr [(\relentangle \setminus q_i,\purestate \cup \{q_i\}),\random]}
              {\MESABS} \\ \\
     \inferencerule{}
              {[(\relentangle,\purestate), \app{\cnot}{\couple{q_i,q_j}}]
                    \redprobabstr [(\relentangle,\purestate),\couple{q_i,q_j}]}
              {\CNOTABSONE} \mbox{if $q_i \in \purestate$}\\ \\
     \inferencerule{}
              {[(\relentangle,\purestate), \app{\cnot}{\couple{q_i,q_j}}]
                    \redprobabstr [(\relentangle \cdot q_i \entangle
                    q_j , \purestate\setminus\{q_i,q_j\}),\couple{q_i,q_j}]}
              {\CNOTABSZERO} \mbox{if $q_i \not \in \purestate$}
      \end{array}$$

      Where $\random$ is non deterministically $\true$ or $\false$,
      $\relentangle \setminus q_i$ is the equivalence relation such
      that if $(x,y)\in relentangle$ and $x\not=q_i$ or exclusive $y
      \not=q_i$ then $(x,y) \in \relentangle \setminus q_i$ otherwise
      $(x,y) \not \in \relentangle \setminus q_i$, and where
      $\relentangle \cdot q_i \entangle q_j $ is the equivalence
      relation $\relentangle$ in which the equivalence classes of
      $q_i, q_j$ have been merged together.
  \end{definition}

  Note that this abstract semantics is not deterministic since it
non deterministically gives $\true$ or $\false$ as result of a
measure. Its correctness can hurt the intuition since the measurement
of a quantum bit in a base state, say $\ket{\true}$, can never produce
$\ket{\false}$. Note also that since our system is normalizing the
number of all possible abstract executions is finite. Hence, computable.

  \begin{definition}[Abstract program semantics]
    \label{def:}
    Consider an AQS $A$, the semantics of program
    $\Gamma;\Lambda\vdash M:\tau$ under $A$,
    written $\progsem{M}{A}^{\Gamma;\Lambda}$, is the set of $A'$ such
    that $[A,M] \redprobabstr^{*}
    [A',V]$ where $V$ is a value. 
  \end{definition}

  Notice that the abstract semantics of a program is a collecting
  semantics. It may explore branches
that are never going to be used in actual computation. Indeed in the
operational semantics measurement gives a non deterministic
answer. Nevertheless, correctness is ensured by the if judgment rules
(see rule $\JDGIF$ in definition \ref{def:proofrules}).   

  \begin{proposition}
    \label{prop:abstrexec}
    Let $A \models \ket{\varphi}$, $\Gamma;\Lambda\vdash M:\tau$.
    Suppose that $[\ket{\varphi},M] \redprob{\gamma}^{*} [\ket{\varphi'},V]$
    then there exists $A' \models \ket{\varphi'}$ such that
    $[A,M]\redprobabstr^{*}[A',V]$.  
  \end{proposition}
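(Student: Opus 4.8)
The plan is to argue by induction on the length $k$ of the concrete reduction $[\ket{\varphi},M]\redprob{\gamma}^{*}[\ket{\varphi'},V]$, the inductive step being an instance of a \emph{one--step simulation lemma}: if $A\models\ket{\varphi}$ and $[\ket{\varphi},M]\redprob{p}[\ket{\varphi_1},M_1]$, then there is an AQS $A_1$ with $A_1\models\ket{\varphi_1}$ and $[A,M]\redprobabstr[A_1,M_1]$. Granting the lemma, the case $k=0$ is immediate (take $A'=A$, since $M=V$), and for $k>0$ one factors the run as $[\ket{\varphi},M]\redprob{p}[\ket{\varphi_1},M_1]\redprob{\gamma'}^{*}[\ket{\varphi'},V]$, gets $A_1\models\ket{\varphi_1}$ from the lemma, applies the induction hypothesis to the $(k-1)$--step tail starting from the adequate $A_1$, and concatenates the two abstract runs. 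Note that only the \emph{existence} of one adequate $A'$ tracking \emph{this} concrete run is required, so for the non--deterministic rule $\MESABS$ we simply select the branch of $\random$ that agrees with the concrete measurement outcome.

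For the one--step lemma I would induct on the derivation of $[\ket{\varphi},M]\redprob{p}[\ket{\varphi_1},M_1]$, by cases on the last rule. The purely functional axioms leave the quantum state unchanged, and by Definition~\ref{def:abstropsem} the abstract semantics carries the very same rules on the functional fragment, so $A_1:=A$ works and adequacy is preserved for free; the congruence rules are dispatched by the inner induction hypothesis on the subderivation followed by the matching abstract congruence rule. This reduces the problem to the four quantum axioms. For $\PHASE$ and $\HADA$ the state becomes $\phase^{i}(\ket{\varphi})$, resp.\ $\hadamard^{i}(\ket{\varphi})$, i.e.\ it is modified by a unitary acting on qubit $i$ only; such a unitary never alters the reduced state of any qubit $x\neq i$ and never breaks a product decomposition, so any separation witnessing $\neg(x\entangle y)$ for $\ket{\varphi}$ still witnesses it afterwards, and any qubit other than $i$ with a deterministic measurement keeps one. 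Hence $A$ itself is adequate for $\phase^{i}(\ket{\varphi})$ ($\PHASEABS$) and $(\relentangle,\purestate\setminus\{q_i\})$ is adequate for $\hadamard^{i}(\ket{\varphi})$ ($\HADAABS$)---dropping $q_i$ from $\purestate$ being exactly what covers Hadamard turning a base state into a superposition. For $\MESF$/$\MEST$ the new state is the normalised projection of $\ket{\varphi}$ on qubit $i$ being $\false$ (resp.\ $\true$): qubit $i$ is then in a base state, so removing it from every equivalence class ($\relentangle\setminus q_i$) and putting it into $\purestate$ is sound, and for the remaining qubits the projector on $i$ sits inside one side of any decomposition not involving $i$, hence preserves both separability and deterministic measurability, as required by $\MESABS$.

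The main obstacle is the $\cnot$ case. When $q_i\in\purestate$, adequacy of $A$ forces the control qubit to be in a base state, so $\cnot^{i,j}$ acts as the identity or as $X$ on qubit $j$ alone, a unitary local to $\{q_i,q_j\}$; the argument of the previous paragraph then shows $A$ stays adequate, matching $\CNOTABSONE$. When $q_i\notin\purestate$ one must show that $\relentangle':=\relentangle\cdot q_i\entangle q_j$ over--approximates $\entangled(\cnot^{i,j}(\ket{\varphi}))$, as demanded by $\CNOTABSZERO$. The key observation is that adequacy gives $\entangled(\ket{\varphi})\subseteq\relentangle\subseteq\relentangle'$ (reading each pure qubit as a singleton class), so every $\relentangle'$--class is a union of genuine entanglement classes of $\ket{\varphi}$; consequently, for any $\relentangle'$--class $C$, $\ket{\varphi}$ is a product across the partition $(C,S\setminus C)$. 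Now suppose $\neg(x\entangle y)$ holds in $\relentangle'$ and let $C$ be the $\relentangle'$--class of $x$, so $y\notin C$. Since $q_i$ and $q_j$ lie in the \emph{same} $\relentangle'$--class, the pair $\{q_i,q_j\}$ is entirely inside $C$ or entirely inside $S\setminus C$; in either case $\cnot^{i,j}$ is local to one side of $(C,S\setminus C)$ and cannot break it, so $\cnot^{i,j}(\ket{\varphi})$ is still a product across that partition, witnessing separability of $x$ and $y$. Purity is handled by dropping $q_i,q_j$ from $\purestate$, which is sound because $\cnot^{i,j}$ leaves every other qubit's reduced state untouched. The only remaining bookkeeping is the mismatch between the PER $\relentangle$ (defined on $S\setminus\purestate$) and the total relation $\entangled(\ket{\varphi})$, which is harmless since adequacy already makes every qubit of $\purestate$ separable from the rest.
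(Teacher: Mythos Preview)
Your proof is correct and follows essentially the same strategy as the paper's: induction on the length of the concrete reduction together with a case analysis on the operational rule used, the functional rules being trivial and the quantum rules handled one by one. Your organisation via an explicit one--step simulation lemma (peeling off the \emph{first} step rather than the last) and your detailed locality argument for $\CNOTABSZERO$ are more careful than the paper's, which dispatches that case by a one--line appeal to $\relentangle$ and $\purestate$ being safe approximations, but the underlying approach is the same.
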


  \begin{proof}
    The proof is done by induction on the number of steps of the
    reduction between $[\ket{\varphi},M]$ and $[\ket{\varphi'},V$. The
    proposition is clearly true if there is $0$ step since $M=V$,
    $\varphi=\varphi'$ and $A'=A$ proves the result. 

    Now consider the last rule used. If this rule is one of the
    purely functional part of the calculus (see
    def. \ref{def:quantumreduction}) the proposition follow directly
    from the induction hypothesis since the AQS is not changed. We
    thus have the following possibilities for the last rule:
    \begin{itemize}
      \item It is $\PHASEABS$: If the qbit $q$ on which phase is applied
        is a base state it can be written $\alpha \ket{l}$ with $l$
        being either $\true$ or $\false$. Thus $\phase q =
        \exp^{i\pi/4}\alpha$, thus still a base state. Hence
        $\purestate$ remains unchanged. 
      \item It is $\HADAABS$: if $(\relentangle,\purestate) \models \varphi$, then 
               $(\relentangle,\purestate\setminus\{q_i\}) \models
               (\hadamard_i \ \ \ket{\varphi})$ because of definition
               \ref{def:adequacy} since in $(\hadamard_i \ \
               \ket{\varphi})$, any $q_j$ is in a non base state only if
               it is in a non base state in $\ket{\varphi}$. 

      \item It is $\MESABS$: After the measure the qubit vanishes. Moreover
            concrete measure probabilistically produces $\true$ or
            $\false$. Regarding the concrete result one can choose the
            appropriate value as result of the abstract measure,
            moreover the measured qubit is in a base state (hence the 
            $\purestate\cup\{q_i\}$).   

      \item It is $\NEWABS$: then by definition $\ket{\varphi'}=\ket{\true} \otimes 
            \ket{\varphi}$, hence quantum in a base state in $\varphi$ remain in a base state in 
            $\varphi'$, moreover the new qubit is in a base state. 

      \item It is $\CNOTABSZERO$: If the two qubits $q_i=\alpha \ket{l}, q_j=\beta \ket{l'}$ 
            are in a base state then 
            \begin{itemize}
              \item If $l=\true$ then $\cnot(\alpha\ket{\true}\otimes\beta\ket{l'})=
                    \alpha\ket{\true}\otimes\beta\ket{\neg l'}$
              \item If $l=\false$ then $\cnot(\alpha\ket{\false}\otimes\beta\ket{l'})=
                    \alpha\ket{\false}\otimes\beta\ket{l'}$
            \end{itemize} 
            in both cases we obtain two separable qubits.  

            If only $q_i= \alpha'\ket{l}$ is in a base state and 
           $q_j=\alpha \ket{\false}+\beta \ket{\true}$ is not.
           \begin{itemize}
             \item If $l=\true$ then $\cnot(\alpha \ket{\true}\otimes
               \alpha \ket{\false}+\beta \ket{\true})
               = \alpha'\ket{\true} \otimes \beta \ket{\true}+\alpha \ket{\false}$ 
             \item If $l=\false$ then $\cnot(\alpha'\ket{\true}
               \otimes \alpha \ket{\true}+\beta \ket{\false}) = \alpha'\ket{\true}
               \otimes \alpha \ket{\true}+\beta \ket{\false})$
           \end{itemize}
           here also we obtain two separable qubits. Moreover in all
           cases $q_i$ remains in a base state. 
  
      \item It is $\CNOTABSONE$: The property follows from induction
        hypothesis and from the fact that $\relentangle$ and
        $\purestate$ are safe approximations. 
    \end{itemize}
  \end{proof}
 
\subsubsection{Semantics of entanglement assertions}

  We now give the semantics of a well typed assertion with relation to
a concrete quantum state. It is done via an abstract quantum state
which is adequate with regards to the concrete quantum state. The idea
is as follows: if $\ket{\varphi} \models A$, and if $\Gamma;\Lambda;
\Delta \vdash C$ then we define the satisfaction relation
$\model^{\Gamma;\Lambda;\Delta} \models C$, which states that under a
proper model depending on the typing context, then $C$ is
satisfied. Basically it amounts to check two properties : whether or
not two quantum bits are in the same entanglement equivalence class
and whether or not a particular quantum bit is in base state. 

  \begin{definition}[Abstract observational equivalence]
    \label{def:equobs}
    Suppose that $\Gamma;\Lambda \vdash M,M' : \tau$. $M$ and $M'$ are
    observationally equivalent, written $M \equobs_{A}^{\Gamma,\Lambda} M'$, if and
    only if for all context $C[.]$ 
    such that $\cdot;\cdot \vdash C[M], C[M']:\bits$ and for all AQS $A$ we have
        $$\progsem{C[M]}{A}^{\Gamma,\Lambda}=\progsem{C[M']}{A}^{\Gamma,\Lambda}$$
    The equivalence class of $M$ is denoted by
    $\equivclass{M}_{A}^{\Gamma,\Lambda}$, by extension we say that
    the type of this equivalence class is $\tau$. 
  \end{definition}
  
  \begin{definition}[Abstract values]
    \label{def:abstractvalue}
    In assertion typing context $\Gamma; \Lambda; \Delta$, an abstract value 
    $v^{\Gamma;\Lambda;\Delta}_{A,tau}$ of type
    $\tau$, where $\tau \not = \sigma \otimes \sigma'$, with relation
    to context $\Gamma;\Lambda;\Delta$ and AQS $A=(\relentangle,\purestate)$ is:  
    \begin{itemize}
      \item An equivalence class of type $\tau$ for
        $\equobs_{A}^{\Gamma,\Lambda}$, if $\tau \not = \qbits$. 
      \item a pair $(C,b)$ formed by an equivalence class $C$ of
        $\relentangle$ and a boolean $b$ (the idea being that if $b$
        is true then the denoted qubit is in $\purestate$).
    \end{itemize}
 
    If $\tau = \sigma' \otimes \sigma''$, then
    $v^{\Gamma;\Lambda;\Delta}_{A,\tau}$ is a pair $(v',v'')$ formed by abstract
    values of respective types $\sigma',\sigma''$.  

    The set of abstract values under an AQS $A$, typing context 
    $\Gamma;\Lambda;\Delta$ and for a type $\tau$ is written 
    $\abstrvalset{A,\tau}^{\Gamma;\Lambda;\Delta}$.   
  \end{definition}

  Abstract values are used to define the interpretation of
free variables. Since in a Given an assertion typing context $\Gamma; 
\Lambda; \Delta$ more than one type may occur we need to consider collections of
abstract values of the different types that occur in $\Gamma;\Lambda; \Delta$
: we write $\abstrvalset{\Gamma;\Lambda;\Delta}$ the disjoint union of all
$\abstrvalset{\tau}^{\Gamma;\Lambda;\Delta}$ for every $\tau$ in
$\Gamma; \Lambda; \Delta$.

 \begin{definition}[Models]
   \label{def:models} A $\Gamma;\Lambda;\Delta$ model is a tuple
   $\model^{\Gamma; \Lambda; \Delta} =\langle A, \interpretation
   \rangle$, where $A$ is an AQS, $\interpretation$ is a map from
   variables defined in $\Gamma;\Lambda;\Delta$ to  
   $\abstrvalset{\Gamma;\Lambda;\Delta}$. 
 \end{definition}

 In order to deal with evaluation and quantified formulas we need to define a notion
of model extension. 

 \begin{definition}[Model extensions]
   \label{def:modelextension}
   Let $\model^{\Gamma;\Lambda;\Delta} =\langle A,\interpretation\rangle$ be a model, 
   then the model $\model'$ written $\model \cdot
   x:v=\langle A,\interpretation'\rangle$, where 
   $v \in \abstrvalset{A,\tau}^{\Gamma;\Lambda;\Delta}$ is defined as follows:
   \begin{itemize}
     \item the typing context of $\model'$ is $\Gamma; \Lambda; \Delta,x:\tau$. 
     \item If the type of $x$ is $\tau = \sigma \otimes \sigma'$, then
       $v$ is a couple made of abstract values $V',v''$ of respective
       type $\sigma, \sigma'$.  
     \item If the type of $x$ is $\qbits$: if $v=(C,\true)$ then $A'=(\relentangle \cup C, 
           \purestate'\cup\{x\})$, otherwise if $v=(C,\false)$ then $A'=(\relentangle \cup C, 
           \purestate')$. 
     \item If the type of $x$ is $\sigma \not = \qbits$, then: $\interpretation'(y) = 
           \interpretation(y)$ for all $x \not = y$ and $\interpretation'(x)=v$ 
   \end{itemize}   
  \end{definition}

  We now define term interpretation. It is standard and amounts to an
interpretation of names into abstract values of the right type. 

  \begin{definition}[Term interpretation]
   \label{def:terminterpretations}
   Let $\model^{\Gamma,\Lambda}=\langle A,\interpretation,\tau \rangle$ be a model, the
   interpretation of a term $u$ is defined by:
   \begin{itemize}
     \item $\interpret{u}_{\model}=\interpretation(u)$ if the type of $u$ is not $\qbits$.
     \item $\interpret{q_i}_{\model}=(\overline{q_i}^{A},b_{i}^{A})$,
       where $b_i^A$ is true iff $q_i \purestate$ with $A=\couple{\relentangle,\purestate}$.
     \item $\interpret{\couple{e,e'}}_{\model} = \couple{\interpret{\model}_{A},
            \interpret{e'}_{\model}}$
   \end{itemize} 
  \end{definition}

  \begin{definition}[Satisfaction]
   \label{def:satisfaction}
   The satisfaction of an assertion $C$ in the model $\model =
   \langle A,\interpretation \rangle$, is written $\model \models C$, is
   inductively defined by the following rules:
   \begin{itemize}
     \item $\model \models u \entangle v$ if
       $(\pi_1(\interpret{u}_{\model}),\pi_1(\interpret{v}_{model})) \in \relentangle_{A}$.
     \item $\model \models \pure u$  if $\pi_2(\interpret{u}_{\model})$ is true.
     \item $\model \models e_1 = e_2$ if $\interpret{e_2}_A = \interpret{e_1}_A$.
     \item $\model \models \neg C$ if $\models$ does not satisfy $C$.
     \item $\model \models C \vee C'$ if $\model \models C$ or $\model \models C'$. 
     \item $\model \models C \wedge C'$ if  $\model \models C$ and $\model \models C'$.
     \item $\model \models C \implies C'$ if $\model \models C$
           implies $\model \models C'$.
     \item $\model \models \forall u.C$ if for all model
       $\model'=\model\cdot u.v$, one has $\model' \models C$. 
     \item $\model \models \exists u.C$ if there is an abstract value
       $v$ such that if $\model'=\model\cdot u.v$, one has $\model' \models C$. 
     \item $\model \models
       \assertion{C}{\evaluation{e_1}{e_2}{e_3}}{C'}$ if for all models
       $\model'^{\Gamma;\Lambda;\Delta}=\langle A', \interpretation'\rangle$ such that
       $\model'^{\Gamma;\Lambda;\Delta'} \models C$, with the
       following conditions: 
       $\Gamma;\Lambda;\Delta \vdash e_1 : \sigma \to \tau$, and
       $\Gamma; \Lambda; \Delta \vdash e_2: \sigma$ such that for all terms 
       $t_1 \in \interpret{e_1}_{\model'}, t_2 \in \interpret{e_2}_{\model'}$ one has
       \begin{itemize}
         \item $[A,\app{t_1}{t_2}]\redprobabstr^{*}[A',V]$
         \item we have two sub-cases:
           \begin{enumerate}
                  \item $\tau$ is $\qbits$ and $V=q_i$ and
                    $\model'= \model \cdot e_3:(\overline{q_i}^{A'},
                    q_i \in \purestate_{A'})$ 
                  \item $\tau$ is not $\qbits$ and  $\model' \cdot e_3:
                 \equivclass{V}_{A}^{\Gamma;\Lambda;\Delta,\tau} \models C'$
           \end{enumerate}
       \end{itemize}
   \end{itemize} 
  \end{definition}

  \subsection{Judgments and proof rules }
    \label{subsec:judgements}

   We now give rules to derive judgments of the form 
$\judge{C}{M}{\Gamma;\Lambda;\Delta;\tau}{u}{C'}$. Those judgments
bind $u$ in $C'$, thus $u$ cannot occur freely in $C$. There are two kinds
of rules: the first one follow the structure of $M$, the second one
are purely logical rules.   
 
   \begin{definition}[Language rules]
     \label{def:proofrules}
      Let $\Gamma;\Lambda \vdash M:\tau$, we define the judgment
      $\judge{C}{M}{\Gamma;\Lambda;\Delta;\tau}{u}{C'}$ inductively as follows: 

     $$\begin{array}{cc}
        \inferencerule{\judge{C \wedge \pure u}{N}
                             {\Gamma;\Lambda;\Delta;\qbits\otimes \qbits}{\couple{u,v}}{C'} }
                      {\judge{C \wedge \pure u}{(\cnot \  N)}
                             {\Gamma;\Lambda;\Delta;\qbits\otimes\qbits}{\couple{u,v}}{C'}}
                      {\JDGCNOTONE} \\ \\ 
        \inferencerule{\judge{C}{N}
                             {\Gamma;\Lambda;\Delta;\qbits \otimes \qbits}{\couple{u,v}}{C'}}
                      {\judge{C}{(\cnot \  N)}
                             {\Gamma;\Lambda;\Delta;\qbits
                               \otimes \qbits}
                             {\couple{u,v}}{C'\wedge u \entangle v}}
                      {\JDGCNOTTWO} \\ \\
        \inferencerule{\judge{C}{N}
                             {\Gamma;\Lambda;\Delta;\qbits}{v}{C'}}
                      {\judge{C}{(\hadamard \ N)}
                             {\Gamma;\Lambda;\Delta;\qbits}{v}{C'[\neg \pure v]}}
                      {\JDGHAD} \\ \\
        \inferencerule{\judge{C}{N}
                             {\Gamma;\Lambda;\Delta;\qbits}{u}{C'}}
                      {\judge{C}{(\phase \ N)}
                             {\Gamma;\Lambda;\Delta,\qbits}{u}{C'}}
                      {\JDGPHASE}\\ \\ 
        \inferencerule{}
                      {\judge{C[u/x]}{x}
                             {\Gamma;\Lambda;\Delta,u:\tau;\tau}{u}{C}}
                      {\JDGVAR} \\ \\
        \inferencerule{c \in \{\true,\false\}}
                      {\judge{C}{c}
                             {\Gamma;\Lambda;\Delta,u:\bits;\bits}{u}{C}}
                      {\JDGCONST} \\ \\
          \inferencerule{\judge{C}{M}
                               {\Gamma;\Lambda;\Delta}{\Gamma;\Lambda;\Delta;\qbits}{u}{C'}}
                        {\judge{C}{\meas \ M}
                               {\Gamma;\Lambda;\Delta,v:\bits;\bits}{v}{C'[-u]\wedge \pure u}}
                        {\JDGMEAS} \\ \\
          \inferencerule{\judge{C}{M}
                               {\Gamma;\Lambda;\Delta;\bits}{b}{C_0}~~
                         \judge{C_0[\true/b]}{N}
                               {\Gamma;\Lambda;\Delta;\tau}{x}{C'}~~
                         \judge{C_0[\false/b]}{P}
                               {\Gamma;\Lambda;\Delta;\tau}{x}{C'}}
                        {\judge{C}{\ifthls{M}{N}{P}}
                               {\Gamma;\Lambda;\Delta,u:\tau;\tau}{u}{C'}}
                        {\JDGIF} \\ \\
          \inferencerule{\judge{C}{M}
                               {\Gamma;\Lambda;\Delta;\sigma \to \tau}{m}{C_0} ~~ 
                         \judge{C_0}{N}
                               {\Gamma;\Lambda;\Delta;\sigma}{n}{C_1 \wedge \{C_1\} m
                           \bullet n=u\{C'\}}}
                        {\judge{C}{\app{M}{N}}
                               {\Gamma;\Lambda;\Delta,u:\tau;\tau}{u}{C'}}
                        {\JDGAPP} \\ \\
           \inferencerule{\judge{C^{-x}\wedge C_0}{M}
                                {\Gamma;\Lambda;\Delta;\tau}{m}{C'} }
                         {\judge{C}{\abstr{x}{M}}
                                {\Gamma[-x];\Lambda[-x];\Delta,u:\sigma\to\tau;\sigma
                                  \to \tau}{u}
                               {\forall x.\{C_0\} u\bullet x=m\{C'\}}}
                         {\JDGABS} \\  \\  
         \inferencerule{\judge{C}{M}{\Gamma;\Lambda;\Delta;\tau}{m}{C_0} ~~ 
                         \judge{C_0}{N}{\Gamma;\Lambda;\Delta;\sigma}{n}{C'[m/u,n/v]}}
                        {\judge{C}{\couple{M,N}}
                               {\Gamma;\Lambda;\Delta,u:\tau,v:\sigma;\tau}{\couple{u,v}}{C']}}
                        {\JDGCPL} \\ \\
          \inferencerule{\judge{C}{M}
                               {\Gamma;\Lambda;\Delta;\tau_1\otimes\tau_2}{m}{C'[\pi_{i}(m)/u]} \et 
                                i \in \{1,2\}}
                        {\judge{C}{\proj{i}{M}}
                               {\Gamma;\Lambda;\Delta u:\tau_i;\tau_i}{u}{C'}}
                        {\JDGLET i}
         \end{array}$$

    Where in rule $\JDGHAD$, if there exists $C''$ such that 
   $C''\wedge \pure u \equiv C'$ the assertion $C'[\neg \pure v]$ is 
   $C''\wedge \neg \pure u$ otherwise it is $C' \neg \pure u$. In
   $\JDGMEAS$, the assertion $C'[-u]$ is $C'$ where all assertions
   containing $u$ have been deleted. In $\JDGABS$, $C^{-x}$ means that
   $x$ does not occur freely in $C$. In $\JDGVAR$, $C[u/x]$ is the
   assertion $C$ where all free occurrences of $x$ have been replaced
   by $u$. 
   \end{definition} 

 Judgment of the purely functional fragment are standard see
\cite{BerHonYos05}. We have just modified the way to handle couples in
order to ease manipulations, but we could have used projections instead
of introducing two different names. Regarding the quantum fragment,
rule $\JDGCNOTONE$ has no influences over quantum entanglement since
the first argument of the $\cnot$ is in a base state; rule
$\JDGCNOTTWO$ introduces an entanglement between the two arguments of
the $\cnot$ operator. Notice that it is not useful to introduce all
entanglement pairs introduced. Indeed, since the entanglement relation
is an equivalence relation one can safely add to judgment (see
logical rules that follow in def. \ref{def:logrules}) statements
for transitivity, reflexivity and symmetry of entanglement relation,
for instance $\forall x,y,z. x \entangle y \wedge y \entangle z \implies x
\entangle z$ for transitivity. Indeed any abstract quantum state, by definition,
validates those statements which will be implicitly supposed in the
following. As we saw in the proof of proposition \ref{prop:abstrexec}, the phase
gate does not change the fact that a quantum bit is in a base state,
whereas the Hadamard gate may make him not in a base state, hence
explaining the conclusions of rules $\JDGHAD$ $\JDGPHASE$.

  We now give purely logical rules. One may see them as an adapted
version of standard first order logic sequent calculus. 

   \begin{definition}[Logical rules]
   \label{def:logrules}
    $$\begin{array}{c}
        \inferencerule{\judge{C_0}{V}{}{u}{C_0'} \et C \vdash C_0'
          \et C_0 \vdash C'}
                      {\judge{C}{V}{}{u}{C'}}
                      {\JDGLOG} \\ \\
        \inferencerule{\judge{C}{V}{}{u}{C'}}
                      {\judge{C \wedge C_0}{V}{}{u}{C' \wedge C_0}}
                      {\JDGPROM} \\ \\
        \inferencerule{\judge{C \wedge C_0}{V}{}{u}{C'}}
                      {\judge{C}{V}{}{u}{C_0 \implies C'}}
                      {\JDGIMPEL} \\ \\
        \inferencerule{\judge{C}{M}{}{u}{C_0 \implies C'}}
                      {\judge{C \wedge C_0}{V}{}{u}{C'}}
                      {\JDGETEL} \\ \\
        \inferencerule{\judge{C_1}{M}{}{u}{C} \et \judge{C_2}{M}{}{u}{C}}
                      {\judge{C_1 \vee C_2}{M}{}{u}{C}}
                      {\JDGOUL} \\ \\
        \inferencerule{\judge{C}{M}{}{u}{C_1} \et \judge{C}{M}{}{u}{C_2}}
                      {\judge{C}{M}{}{u}{C_1 \wedge C_2}}
                      {\JDGETR} \\ \\
        \inferencerule{\judge{C}{M}{}{u}{C'^{-x}}}
                      {\judge{\exists x.C}{M}{}{u}{C'}}
                      {\JDGEXL} \\ \\
        \inferencerule{\judge{C^{-x}}{M}{}{u}{C'}}
                      {\judge{C}{M}{}{u}{\forall x.C'}}
                      {\JDGFORR} \safe{\\ \\
         \inferencerule{C \logic C_0 ~~ 
                        \judge{C_0}{M}{}{m}{C_1} ~~
                        C_1 \logic C'}
                       {\judge{C}{M}{}{m}{C'}}
                       {\JDGTRANS}}
      \end{array}$$   

      where $C \vdash C'$ is the standard first order logic proof
      derivation (see e.g. \cite{Smul68}). 
   \end{definition}

  We now give the soundness result relating 

\begin{theorem}[Soundness]
    \label{theo:validity}
    Suppose that $\judge{C}{M}{\Gamma;\Lambda;\Delta;\tau}{u}{C'}$ is
    provable. Then for all model $\model =\couple{A,\interpretation}$,
    abstract quantum state $A'$, abstract value $v$ such that 
    \begin{enumerate}
       \item $\model \models C$
       \item $[A,M] \redprobabstr^{*} [A',V]$
       \item $v \in \abstrvalset{A',\tau}^{\Gamma;\Lambda;\Delta}$
    \end{enumerate} then $\model\cdot u:v \models C'$. 
\end{theorem}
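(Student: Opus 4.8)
The plan is to prove the theorem by structural induction on the derivation of the judgment $\judge{C}{M}{\Gamma;\Lambda;\Delta;\tau}{u}{C'}$. For each proof rule from Definition \ref{def:proofrules} and Definition \ref{def:logrules} we must show that whenever the premises satisfy the soundness statement and the hypotheses (1)--(3) hold for the conclusion, then $\model\cdot u:v \models C'$. First I would treat the purely logical rules ($\JDGLOG$, $\JDGPROM$, $\JDGIMPEL$, $\JDGETEL$, $\JDGOUL$, $\JDGETR$, $\JDGEXL$, $\JDGFORR$): these are essentially soundness of first-order sequent rules against the satisfaction relation of Definition \ref{def:satisfaction}, using that $C\vdash C'$ is sound for $\models$ (a routine consequence of the Tarskian clauses), together with the fact that the underlying term $V$ is a value so no abstract reduction happens and $A'=A$.

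Next I would handle the axiom-like base rules $\JDGVAR$ and $\JDGCONST$: here $M$ is a value, $A'=A$, and the conclusion follows because $\model\cdot u:v\models C$ exactly when $\model\models C[u/x]$ (respectively $C$ with $u$ fresh), by the substitution/interpretation lemma relating $\interpret{\cdot}$ under model extension with syntactic substitution. Then the compound functional rules $\JDGIF$, $\JDGAPP$, $\JDGABS$, $\JDGCPL$, $\JDGLET i$: for each I would follow the abstract reduction $[A,M]\redprobabstr^{*}[A',V]$, decompose it according to the shape of $M$ (e.g. for $\app{M}{N}$ first reduce $N$ then $M$ then fire $\BETAV$), apply the induction hypothesis to the sub-derivations with the intermediate AQS and intermediate abstract values, and thread the internal pre/post conditions $C_0$, $C_1$ through; the evaluation-formula clause of Definition \ref{def:satisfaction} is exactly tailored so that the premise $\{C_1\}\,m\bullet n=u\,\{C'\}$ unfolds to the $\BETAV$ step. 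For $\JDGABS$ one additionally checks that the universally quantified evaluation formula holds for every abstract-value argument $v$, which is again just the satisfaction clause for $\forall$ composed with the clause for $\bullet$.

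The quantum rules $\JDGPHASE$, $\JDGHAD$, $\JDGMEAS$, $\JDGCNOTONE$, $\JDGCNOTTWO$ are where the content of the abstract operational semantics (Definition \ref{def:abstropsem}) and Proposition \ref{prop:abstrexec} are used. For $\JDGPHASE$, rule $\PHASEABS$ leaves $(\relentangle,\purestate)$ unchanged, so $\model\cdot u:v\models C'$ follows directly from the IH. For $\JDGHAD$, rule $\HADAABS$ removes $q_i$ from $\purestate$, which is precisely reflected by replacing $\pure u$ by $\neg\pure u$ in the post-condition; I would check that $C'[\neg\pure v]$ as defined (erasing any conjunct $\pure u$, adding $\neg\pure u$) is satisfied in the updated model. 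For $\JDGMEAS$, rule $\MESABS$ sets $\relentangle\setminus q_i$ and $\purestate\cup\{q_i\}$, so I delete every assertion mentioning the anchor $u$ (justifying $C'[-u]$, since $u$ is no longer a live qubit) and add $\pure u$. For $\JDGCNOTTWO$, $\CNOTABSZERO$ merges the equivalence classes of $q_i$ and $q_j$, which validates the added conjunct $u\entangle v$; for $\JDGCNOTONE$, the precondition $\pure u$ matches the side condition $q_i\in\purestate$ of $\CNOTABSONE$, so $\relentangle$ is unchanged and the IH applies verbatim.

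The main obstacle I expect is the higher-order case, specifically $\JDGAPP$ together with $\JDGABS$, because the evaluation formula $\{C_1\}\,m\bullet n=u\,\{C'\}$ mixes a quantified statement about \emph{all} terms $t_1\in\interpret{e_1}_{\model'}$, $t_2\in\interpret{e_2}_{\model'}$ with an \emph{existential}-style commitment to a specific abstract reduction $[A,\app{t_1}{t_2}]\redprobabstr^{*}[A',V]$, and one must check this is well-defined and stable under the abstract observational equivalence classes $\equivclass{V}_A^{\Gamma;\Lambda;\Delta,\tau}$ used to form abstract values of non-qubit type. I would isolate a lemma stating that if $t_1\equobs_A t_1'$ and $t_2\equobs_A t_2'$ then $\app{t_1}{t_2}$ and $\app{t_1'}{t_2'}$ produce the same set of AQS outcomes and observationally-equivalent values, so that the choice of representatives in Definition \ref{def:abstractvalue} is irrelevant; granting that lemma, the inductive step for $\JDGAPP$ reduces to unfolding the satisfaction clause for $\bullet$ and applying the IH to the two sub-derivations. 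A secondary delicate point is the bookkeeping of anchor typing contexts $\Delta$ under model extension $\model\cdot x:v$ (Definition \ref{def:modelextension}), particularly the AQS update $A'=(\relentangle\cup C,\purestate'\cup\{x\})$ when $x:\qbits$ — one must verify that adding the fresh anchor's equivalence class does not collide with existing classes, which I would fold into the statement of the substitution lemma.
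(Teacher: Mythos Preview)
Your proposal is correct and follows essentially the same approach as the paper: induction on the judgment derivation, dispatching the logical rules by soundness of first-order logic and the quantum rules by matching each to its abstract operational rule ($\PHASEABS$, $\HADAABS$, $\MESABS$, $\CNOTABSONE$, $\CNOTABSZERO$). You are in fact more thorough than the paper, which defers the entire functional fragment (including the delicate $\JDGAPP$/$\JDGABS$ interaction and the congruence lemma for $\equobs_A$ that you flag) to \cite{BerHonYos05} without further comment.
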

\begin{proof}
  The proof is done by induction  on judgment rules. The last judgment
  rule used can be either a logical or a language one. If it is a
  logical one, soundness follows from the soundness of first order
  logic. Observe that we have a value in the promotion rule $\JDGPROM$
  thus no reductions are possible and the soundness is vacuously valid.  

  If he last judgment rules used is a language rule, we only consider
  the quantum fragment (indeed for the functional fragment, the proof follows
  directly from \cite{BerHonYos05}), thus we have the following cases:
  \begin{itemize}
    \item $\JDGCNOTONE$, thus
      $\judge{C}{M}{\Gamma;\Lambda;\Delta;\tau}{u}{C'}$ is in facts 
      $\judge{C_1 \wedge \pure u'}{(\cnot \  N)}
      {\Gamma;\Lambda;\Delta;\qbits\otimes\qbits}{\couple{u',v'}}{C'}$.
      By induction hypothesis we know that if $\model \models C_1 \wedge
      \pure u'$, if $[A,N]\redprobabstr^{*}[A',V]$, and $v \in 
      \abstrvalset{A',\tau}^{\Gamma;\Lambda;\Delta}$, then
      $\model\cdot\couple{u',v'}:v \model C'$. We know that $V$ is a
      couple of qbits (since judgment is well typed), say
      $\couple{q_i,q_j}$. Now $[A',(\cnot \ \couple{q_i,q_j})]
      \redprobabstr [A, \couple{q_i,q_j}]$ thanks to rule
      $\CNOTABSONE$ and due to the fact that $\model \models \pure u'$.     

    \item $\JDGCNOTTWO$, thus 
      $\judge{C}{M}{\Gamma;\Lambda;\Delta;\tau}{u}{C'}$ is in facts
      $\judge{C}{(\cnot \  N)}
                             {\Gamma;\Lambda;\Delta;\qbits
                               \otimes \qbits}
                             {\couple{u',v'}}{C'\wedge u' \entangle v'}$
      we reason similarly as in previous case with the difference that 
      the last abstract operational rule used is $\CNOTABSZERO$. 

    \item $\JDGHAD$, thus 
      $\judge{C}{M}{\Gamma;\Lambda;\Delta;\tau}{u}{C'}$ is in facts
      $\judge{C}{(\hadamard \ N)}
                             {\Gamma;\Lambda;\Delta;\qbits}{u}{C'[\neg
                               \pure u]}$. By induction hypothesis we
                             know that 
     if $\model \models C$, if $[A,N]\redprobabstr^{*}[A',V]$, and $v \in 
      \abstrvalset{A',\tau}^{\Gamma;\Lambda;\Delta}$, then
      $\model\cdot\couple{u}:v \model C'$. Now because judgment is
      well typed $\tau$ is $\qbits$, and $V$ is $q_i$. Thus
      $[A,(\hadamard \ q_i)] \redprobabstr [(\relentangle,\purestate
      \setminus \{q_{i}\}),q_{i}]$, and clearly
      $\model\cdot\couple{u}:v \models \neg \pure u$, the rest is done
      by induction hypothesis.

    \item $\JDGPHASE$, thus 
      $\judge{C}{M}{\Gamma;\Lambda;\Delta;\tau}{u}{C'}$ is direct by
      induction hypothesis and considering abstract reduction rule
      $\PHASEABS$. 

    \item $JDGMEAS$, thus 
      $\judge{C}{M}{\Gamma;\Lambda;\Delta;\tau}{u}{C'}$ is in facts
      $\judge{C}{(\meas \ N)}
                             {\Gamma;\Lambda;\Delta;\qbits}{u}{C'[-u]\wedge
                               \pure u]}$. By induction hypothesis we
                             know that 
     if $\model \models C$, if $[A,N]\redprobabstr^{*}[A',V]$, and $v \in 
      \abstrvalset{A',\tau}^{\Gamma;\Lambda;\Delta}$, then
      $\model \cdot u:v \model C'$. Now because judgment is
      well typed $\tau$ is $\qbits$, and $V$ is $q_i$. Thus
      $[A,(\meas \ q_i)] \redprobabstr [(\relentangle,\purestate \cup \{q_i\}
      \setminus \{q_{i}\}), \random]$, and clearly
      $\model\cdot u:v \models \pure u$, the rest is done
      by induction hypothesis. 
  \end{itemize}
\end{proof}

\begin{example}
  \label{ex:nonlocality}
  The idea of this example is to show how the entanglement logic may
be used to analyze non local and non compositional behavior. Suppose
4 qubits, $x,y,z,t$ such that $x,t$ are entangled and
$y,z$ are entangled and $\{x,t\}$ separable from $\{y,z\}$. Now if we
perform a control not on $x,y$,  then as a side effect $z,t$ are
entangled too, even if quantum bits $x,y$ are discarded by
measurement. Thus we want to prove the following statement:

  $$\judge{\vrai}{P}{}{u}
   {\forall
     x,y,z,t.\assertion{x \entangle y \wedge z \entangle
     t}{\evaluation{u}{y,z}{v}}{x \entangle t}}$$
where $P$ is the following program 
  $$\abstr{y,z}{\letcouple{u,v}{(\cnot \ \couple{y,z})}{\couple{(\meas \
      u),(\meas \ v)}}}$$

 Then using rule $\JDGAPP$ we can derive the following judgment on
 actual quantum bits:

$$\judge{C}{(P \ \couple{q_2,q_3})}{}{\couple{u,v}}{q_1 \entangle q_4}$$

  where $C$ denotes the following assertion : $q_1 \entangle q_2 \wedge
q_3 \entangle q_4$. This judgment is remarkable in the fact that it
asserts on entanglement properties of $q_1,q_4$ while those two
quantum bits do not occur in the piece of code analyzed. 
\end{example}

\section{Conclusion}
\label{sec:conclusion}

  In this paper we have proposed a logic for the static analysis of
entanglement for a functional quantum programing language. We have
proved that this logic is safe and sound: if two quantum bits are
provably separable then they are not entangled while if they are
provably entangled they could actually be separable. The functional
language considered includes higher-order functions. It is, to our
knowledge the first proposal to do so and strictly improves over
\cite{Perdr07} on this respect. We have shown that non local behavior
can be handled by this logic. 

  Completeness of our logic remains an open issue worth of
future investigations. We also hope that this setting will allow
reasoning examples on quantum algorithms, and that it will provide a
useful help for quantum algorithms research in providing a high-level,
compositional reasoning tool. 

\bibliographystyle{abbrv}

\end{document}